\newtheorem{lemma}{Lemma}
\newtheorem{theorem}{Theorem}
\newtheorem{proof}{Proof}
\def\BibTeX{{\rm B\kern-.05em{\sc i\kern-.025em b}\kern-.08em
    T\kern-.1667em\lower.7ex\hbox{E}\kern-.125emX}}
\begin{document}
%\history{Date of submission: September 03, 2022, date of current version xxxx 00, 0000.}
\doi{}

\title{Comments on ``Time-Varying Lyapunov Functions for Tracking Control of Mechanical Systems With and Without Frictions"}
\author{\uppercase{Olalekan Ogunmolu}\authorrefmark{1}, \IEEEmembership{Member, IEEE}}
\address[1]{Microsoft Research, 300 Lafayette Street, New York, NY 10012, USA.}
%\tfootnote{This paragraph of the first footnote will contain support 
%information, including sponsor and financial support acknowledgment. For 
%example, ``This work was supported in part by the U.S. Department of 
%Commerce under Grant BS123456.''}

\markboth
{Lekan Molu: IEEE Access}
{Author \headeretal: Preparation of Papers for IEEE TRANSACTIONS and JOURNALS}

\corresp{Corresponding author: Lekan Molu (e-mail: lekanmolu@microsoft.com).}

\begin{abstract}
	In the article\footnote{Ren, W., Zhang, B, Li, H, and Yan L. \emph{Time-Varying Lyapunov Functions for Tracking Control of Mechanical Systems With and Without Frictions}. IEEE Access, vol. 8. pp. 51510-51517. 2020.\label{ft:orig_paper}}, the authors introduced a time-varying Lyapunov function for the stability analysis of nonlinear systems whose motion is governed by standard Newton-Euler equations. The authors established asymptotic stability with the choice of two symmetric positive definite matrices  restricted by certain eigenvalue bounds in the control law. Exponential stability in the sense of Lyapunov using integrator backstepping and Lyapunov redesign is established in this note using just one matrix in the derived controller. We do not impose minimum eigenvalue bound requirements on the symmetric positive definite matrix introduced in our analysis to guarantee stability. Reducing the parameters needed in the control law, our analysis improves the  stability and convergence rates of tracking errors reported in the article\footref{ft:orig_paper}. 
\end{abstract}

\begin{keywords}
Nonlinear mechanical systems, time-varying Lyapunov functions, tracking control.
\end{keywords}

\titlepgskip=-15pt

\maketitle
%\section{Main}
The well-established method of Lyapunov analysis is a great tool for synthesizing stable controllers for nonlinear systems such as robot manipulators, space flexures, parallel robots, and magnetic bearings \textit{inter alia}. The kinetics of such systems are typically derived from (time-invariant) Euler-Lagrange equations. When the underlying nonlinear system is time-varying, however, stability is difficult to establish with traditional time-invariant Lyapunov functions. 

Considering frictionless mechanical systems, the authors of the article~\cite{Ren}
\begin{inparaenum}[(i)]
	\item proposed a time-varying Lyapunov function for nonlinear systems;
	\item established sufficient conditions for their practical stability; and 
	\item under a zero friction  derived an \textit{asymptotic convergence rate for the filtered velocity and position tracking errors}.
\end{inparaenum} 
In this note, we will \textit{prove exponential error convergence when friction is absent from the dynamics}. In addition, only one feedback matrix (here the symmetric positive definite matrix $P$) is needed for the establishment of our results -- as opposed to two symmetric positive definite matrices ($P$ and $K$) reported by the authors of~\cite{Ren}.

The non-negative function used to prove asymptotic stability in the paper~\cite{Ren} is given as 
\begin{align}
	Q(t) = \dfrac{1}{2} q_r(t)^T M(q) q_r(t) + \dfrac{1}{2} \tilde{q}^T(t) P \tilde{q}(t)
\end{align} 
where all variables are as defined in the paper~\cite{Ren} i.e.  $q$ is the joint space variable, $q_r$ is the filtered velocity tracking error, defined as
\begin{align}
	q_r(t) = \dot{\tilde{q}}(t) + {\lambda} \tilde{q}(t),
	\label{eq:filtered_error}
\end{align}
for a  ${\lambda} > 0$; $\tilde{q}(t)$ is the position tracking error given as 
%
%\begin{align}
	$\tilde{q}(t) = q_d(t)- q(t)$,
%\end{align}
%
for a desired joint position, $q_d$. The matrix $M(q)$ is the symmetric positive definite inertia matrix of the system. 

The model of the system to be controlled is given by 
\begin{equation}
	M(q) \ddot{q} + C(q, \dot{q})\dot{q} + G(q)+ F(t, \dot{q})=\tau
	\label{eq:system}
\end{equation}
where $C(q, \dot{q})$ denotes the Coriolis and centrifugal forces, $G(q)$ models the gravity and non-conservative external forces, $F(t, \dot{q})$ is the time-varying friction, and $\tau$ is the vector of actuator torques. 

\textbf{Observe}: The matrix $(\dot{M}(q) - 2C(q, \dot{q}))$ is skew-symmetric for a suitable choice of $C(q, \dot{q})$. One may also write this property as $\dot{M}(q) = C(q, \dot{q}) + C^T(q, \dot{q})$.
\begin{lemma}[Asymptotic stability without frictions]
	For the nonlinear manipulator \eqref{eq:system} with a zero time-varying friction i.e. $F(t, \dot{q})=0$, every motion of the mechanical system is asymptotically stable when $F(t, \dot{q})\equiv 0$, i.e. $q(t) \rightarrow q_d(t)$ as $t \rightarrow \infty$ provided we choose a virtual input, $\eta(t)$, which is backstepped as an integrator as follows
	\begin{subequations}
		\begin{align}
			\eta (t) &= \dot{q}(t), \\ %\dot{\state}_1(t) \\
			&\triangleq \dot{q}_d(t)- \tilde{q}(t) - \dfrac{1}{2} M^{-1}(q)\dot{M}(q)\tilde{q}(t).
			\label{eq:virtual_cancel}
		\end{align}
	\end{subequations}  
\end{lemma}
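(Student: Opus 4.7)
My plan is to verify the claim by Lyapunov differentiation along the closed-loop trajectory induced by enforcing $\dot{q}(t) = \eta(t)$. The candidate $Q(t) = \tfrac{1}{2} q_r^T M(q) q_r + \tfrac{1}{2} \tilde{q}^T P \tilde{q}$ is already positive definite (since $M$ and $P$ are both symmetric positive definite), so the entire burden falls on showing that $\dot{Q}(t)$ has the required sign.

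First I would substitute $\dot{q} = \eta$ into $\dot{\tilde{q}} = \dot{q}_d - \dot{q}$ to extract a closed-loop expression for $\dot{\tilde{q}}$ carrying the inertia correction $\tfrac{1}{2} M^{-1}\dot{M}\tilde{q}$; substituting this into the definition \eqref{eq:filtered_error} of the filtered error then gives a clean relation between $q_r$ and $\tilde{q}$ alone. Differentiating $Q$ yields
\begin{equation*}
\dot{Q} = q_r^T M \dot{q}_r + \tfrac{1}{2} q_r^T \dot{M}(q) q_r + \tilde{q}^T P \dot{\tilde{q}},
\end{equation*}
and I would use the dynamics~\eqref{eq:system} with $F \equiv 0$ to replace $M\ddot{q}$ by $\tau - C\dot{q} - G$ inside $M \dot{q}_r = M \ddot{q}_d - M \ddot{q} + \lambda M \dot{\tilde{q}}$. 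The skew-symmetry identity $\dot{M}(q) = C + C^T$ (equivalently $q_r^T (\dot{M} - 2C) q_r = 0$) then eliminates the Coriolis quadratic form; this is the standard cancellation used in essentially every Euler-Lagrange tracking proof.

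The engineered structure of $\eta$, and in particular the $\tfrac{1}{2} M^{-1} \dot{M} \tilde{q}$ correction, is designed precisely to cancel the residual cross-terms of the form $\tilde{q}^T \dot{M} \tilde{q}$ (or $\tilde{q}^T P M^{-1} \dot{M} \tilde{q}$) that would otherwise obstruct sign definiteness of $\dot{Q}$. After these cancellations I expect $\dot{Q}$ to collapse into a negative (semi-)definite quadratic form in $(\tilde{q}, q_r)$, from which asymptotic tracking $q(t) \to q_d(t)$ follows either directly or via LaSalle's invariance principle (or Barbalat's lemma) when only semi-definiteness is achieved in one of the two variables.

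The main obstacle I anticipate is bookkeeping: when expanding $\dot{q}_r = \ddot{\tilde{q}} + \lambda \dot{\tilde{q}}$, the closed-loop form of $\dot{\tilde{q}}$ now carries $M^{-1} \dot{M} \tilde{q}$, so its time derivative introduces $\ddot{M}$ and $\tfrac{d}{dt}(M^{-1})$ contributions that must be reorganized or absorbed via further skew-symmetric cancellations. A secondary concern is that the sign of the $-\tilde{q}$ term in $\eta$ must be checked carefully against the convention $\tilde{q} = q_d - q$ so that the resulting $\dot{Q}$ actually decreases rather than grows; this has to be verified by direct calculation rather than inferred by analogy with the classical Slotine-Li passivity-based controller.
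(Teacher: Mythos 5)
Your plan attacks the wrong layer of the argument. The lemma is the \emph{first step of an integrator-backstepping design}: it treats the velocity $\dot q$ itself as the virtual input $\eta$ to the first-order error subsystem $\dot{\tilde q} = \eta - \dot q_d$ (equation \eqref{eq:error_1_def}), and the paper proves it with the simple candidate $V_1 = \tfrac12 \tilde q^T M(q)\tilde q$ of \eqref{eq:lyap_1}. Differentiating gives $\dot V_1 = \tilde q^T M(\eta - \dot q_d) + \tfrac12 \tilde q^T \dot M \tilde q$; substituting \eqref{eq:virtual_cancel} cancels the $\tfrac12 \tilde q^T \dot M\tilde q$ term (you correctly guessed that this is exactly what the $\tfrac12 M^{-1}\dot M\tilde q$ correction is for) and leaves $\dot V_1 = -\tilde q^T M \tilde q < 0$ for $\tilde q \neq 0$. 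Nothing else is used: no torque, no Coriolis matrix, no skew-symmetry, no LaSalle or Barbalat. Your proposal instead starts from the composite function $Q = \tfrac12 q_r^T M q_r + \tfrac12 \tilde q^T P\tilde q$ and plans to substitute $M\ddot q = \tau - C\dot q - G$ into $M\dot q_r$. That substitution cannot be completed here, because the lemma supplies no control law: the torque \eqref{eq:control_nom} is only introduced in Theorem~\ref{th:exp_stab}, which is where $q_r$, $P$, and the skew-symmetry cancellation actually belong. As written, your $\dot Q$ would retain an undetermined $\tau$, and the ``clean relation between $q_r$ and $\tilde q$'' you hope to extract does not exist at this stage. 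The gap is therefore architectural: you are setting up a proof of (a version of) the theorem, not of the lemma.

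One point in your favor: the sign worry you raise at the end is real. With the stated convention $\tilde q = q_d - q$ and $\eta = \dot q$, one gets $\dot{\tilde q} = \dot q_d - \eta$, which is the \emph{negative} of \eqref{eq:error_1_def}; with that sign the same computation yields $\dot V_1 = +\tilde q^T M\tilde q + \tilde q^T\dot M\tilde q$, which is not negative definite. The lemma's proof goes through only under the error convention actually used in \eqref{eq:error_1_def} (equivalently, after flipping the sign of the feedback terms in \eqref{eq:virtual_cancel}). Your instinct to check this by direct calculation rather than by analogy with the Slotine--Li controller is the right one --- but the calculation has to be carried out on the first-order subsystem with $V_1$, not on the full dynamics with $Q$.
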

\begin{proof}	
	Define the tracking error dynamics for a desired joint space variable, $q_d$, as 
	\begin{align}
		\dot{\tilde{q}}(t) &= \eta (t) - \dot{q}_d(t).
		\label{eq:error_1_def}
	\end{align}
	Now, consider the following Lyapunov function candidate 
	\begin{align}
		V_1(t) = \frac{1}{2} \tilde{q}^T(t) {M(q)} \tilde{q}(t).
		\label{eq:lyap_1}
	\end{align} 
	%. 
	We find that %It follows that the time derivative of \eqref{eq:lyap_1} is
	\begin{align}
		\dot{V}_1(t) &= \tilde{q}^T(t) {M}(q) \dot{\tilde{q}}(t) + \dfrac{1}{2} \tilde{q}^T(t) \dot{M}(q) \tilde{q}(t), \\
		&= \tilde{q}^T(t) {M}(q) (\eta (t) - \dot{q}_d(t))+ \dfrac{1}{2} \tilde{q}^T(t) \dot{M}(q) \tilde{q}(t),
		\\
		&= -\tilde{q}^T(t) {M}(q) \tilde{q}(t).
		\label{eq:subsys1_stable}
	\end{align} 
	Equation \eqref{eq:subsys1_stable} implies that the system must be \textit{asymptotically stable}~\cite{Petros}. 
\end{proof}

\begin{theorem}[Exponential stability of the system]
	Consider the entire nonlinear system \eqref{eq:system}, suppose that there exists a $P =P^T \, \succ 0$, then trajectories that emanate from the system are \textbf{exponentially stable} if we choose the control law, $\tau$, according to 
	\begin{align}
		\tau &= {M}(q)\left(\ddot{q_d} + \lambda\dot{\tilde{q}}+ q_r\right) + C(q,\dot{q})\left(\dot{q}+q_r\right) %\nonumber \\
		%& \qquad \qquad\qquad \qquad\qquad \qquad 
		+ G(q).
		\label{eq:control_nom}
	\end{align}% \ref{def:Lyap_sec}.
	%
%	where the $q_r$ is the filtered error defined in \eqref{eq:filtered_error}.
	\label{th:exp_stab}
\end{theorem}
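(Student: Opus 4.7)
The plan is to substitute the stated controller into \eqref{eq:system} under $F \equiv 0$, reduce the closed loop to a coupled error pair in $(\tilde{q}, q_r)$, and then certify exponential decay with a composite Lyapunov function in the spirit of the $Q(t)$ of \cite{Ren}, only with a free weighting that renders any positive definite $P$ admissible.

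First, I would substitute $\tau$ into $M(q)\ddot{q} = \tau - C(q,\dot{q})\dot{q} - G(q)$ and observe that the $M\ddot{q}_d$, $C\dot{q}$ and $G$ contributions cancel on both sides. A short rearrangement, using the identity $\dot{q}_r = \ddot{\tilde{q}} + \lambda \dot{\tilde{q}}$ implied by \eqref{eq:filtered_error}, would produce the closed-loop pair
\begin{align*}
    \dot{\tilde{q}}(t) &= q_r(t) - \lambda\,\tilde{q}(t), \\
    M(q)\dot{q}_r(t) + C(q,\dot{q})q_r(t) &= -M(q)\, q_r(t).
\end{align*}
The second equation is the heart of the result: the inertia-weighted filtered-error variable obeys a unit-damped Euler-Lagrange-like subsystem in which the Coriolis contribution will be killed by the skew-symmetry property stated before Lemma 1.

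Next I would take the composite candidate $V(t) = \tfrac{1}{2} q_r^T M(q) q_r + \tfrac{\mu}{2}\, \tilde{q}^T P \tilde{q}$, with a weight $\mu > 0$ to be chosen. Differentiating along trajectories and invoking $\dot{M}(q) = C(q,\dot{q}) + C^T(q,\dot{q})$ to erase the quadratic form $q_r^T\bigl(\tfrac{1}{2}\dot{M}-C\bigr)q_r$ would yield
\begin{align*}
    \dot{V}(t) = - q_r^T M(q) q_r + \mu\, \tilde{q}^T P q_r - \mu\lambda\, \tilde{q}^T P \tilde{q}.
\end{align*}

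The principal obstacle is the indefinite cross term $\mu\, \tilde{q}^T P q_r$. I would dispatch it by Young's inequality together with the standard uniform inertia bounds $m_1 I \preceq M(q) \preceq m_2 I$: a fraction of $-q_r^T M(q) q_r$ can be allocated to dominate the cross term whenever $\mu$ is sufficiently small, e.g.\ $\mu < 2 m_1 \lambda\, \lambda_{\min}(P)/\lambda_{\max}^2(P)$. The residual quadratic form in $(\tilde{q}, q_r)$ is then negative definite, delivering $\dot{V} \le -\alpha V$ for an explicit $\alpha > 0$ depending only on $m_1, m_2, \lambda$, and the extremal eigenvalues of $P$. The comparison lemma then yields exponential convergence of both $\tilde{q}$ and $\dot{\tilde{q}}$ to zero \emph{without} a minimum-eigenvalue restriction on $P$ itself, which is the improvement advertised in the abstract.
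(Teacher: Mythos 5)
Your argument is correct and, modulo the standard uniform inertia bounds $m_1 I \preceq M(q) \preceq m_2 I$ that you should add as an explicit hypothesis, it establishes Theorem~\ref{th:exp_stab} --- but by a genuinely different route than the note. The note keeps the candidate $V = \tfrac{1}{2}\tilde{q}^T M\tilde{q} + \tfrac{1}{2}q_r^T P M q_r$, imports $\dot{V}_1 = -\tilde{q}^T M\tilde{q}$ verbatim from the lemma on the virtual subsystem, applies the skew-symmetry of $\tfrac{1}{2}\dot{M}-C$ inside the $P$-weighted quadratic form, and lands on the exact identity $\dot{V} = -2V$ of \eqref{eq:lyap_exp_proof}, which gives the clean rate $2$ with no inertia bounds. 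You instead derive the closed loop explicitly --- $\dot{\tilde{q}} = q_r - \lambda\tilde{q}$ and $M\dot{q}_r + Cq_r = -Mq_r$, both of which check out against \eqref{eq:system}, \eqref{eq:filtered_error} and \eqref{eq:control_nom} --- weight the candidate as $\tfrac{1}{2}q_r^T M q_r + \tfrac{\mu}{2}\tilde{q}^T P\tilde{q}$, and absorb the cross term $\mu\,\tilde{q}^T P q_r$ by Young's inequality for $\mu$ small. What your version buys: the cancellation $q_r^T(\tfrac{1}{2}\dot{M}-C)q_r = 0$ is invoked with no $P$ sandwiched in (the note's step $q_r^T P(\tfrac{1}{2}\dot{M}-C)q_r = 0$ holds only when $P$ commutes with the skew-symmetric factor, e.g.\ $P = pI$), and the coupling between $\tilde{q}$ and $q_r$ --- which is genuinely present because the actual control law yields $\dot{\tilde{q}} = q_r - \lambda\tilde{q}$ rather than the virtual-input dynamics used to obtain \eqref{eq:subsys1_stable} --- is confronted rather than dropped. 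What it costs: the inertia bounds become part of the hypotheses, the decay rate $\alpha$ is implicit in $m_1, m_2, \lambda$ and the spectrum of $P$ rather than the crisp $-2V$, and $\mu$ must be taken small; but $\mu$ lives only in the Lyapunov certificate, not in $\tau$, so the advertised absence of an eigenvalue restriction on $P$ survives. Your specific threshold $\mu < 2 m_1 \lambda\, \lambda_{\min}(P)/\lambda_{\max}^2(P)$ sits inside the sufficient range, so the final negative-definiteness claim is sound; just state the bounded-inertia assumption explicitly when you write it up.
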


\begin{proof}%[Proof of Corollary \ref{th:exp_stab}]
	Consider the Lyapunov function candidate
	\begin{align}
		V(t) = V_1(t) + \frac{1}{2}q_r^T(t) PM(q) q_r(t)
	\end{align}
	so 	that the time derivative of $V(t)$ (dropping the r.h.s templated arguments for notation terseness) is
	\begin{subequations}
		\begin{align}
			&\dot{V}(t) = \dot{V}_1(t) + q_r^T PM\dot{q}_r + \frac{1}{2}q_r^T P\dot{M} q_r, \\
			&= \dot{V}_1(t) + q_r^T PM (\ddot{\tilde{q}}+\lambda\dot{\tilde{q}}
			) + \frac{1}{2}q_r^T P\dot{M} q_r, \\
			&= \dot{V}_1(t) + q_r^T PM (\ddot{q}_d-\ddot{q}+\lambda\dot{\tilde{q}}) + \frac{1}{2}q_r^T P\dot{M} q_r, \\
			&= \dot{V}_1(t) + q_r^T PM (\ddot{q}_d+\lambda \dot{\tilde{q}})
			%\right.\right.\right. 
			\nonumber \\&\qquad\qquad
			 +   q_r^T P\left[C(q, \dot{q}) (\dot{q}+ {q}_r) +G(q)-\tau\right].
			\label{eq:lyap_deriv_exp}
		\end{align}
	\end{subequations}
	Where we have used the skew-symmetric property $\dot{M} - 2 C$ in arriving at \eqref{eq:lyap_deriv_exp}. Let $\tau$ be as defined in \eqref{eq:control_nom}, 
	%\begin{align}
	%	\tau_n &= M (\ddot{q}_d+\bm{\Lambda}\dot{\tilde{q}})+ 
	%					C (\dot{q}+ \dot{q}_r)+G(q) + \dfrac{1}{2}Mq_r.
	%					\label{eq:control_nom}
	%\end{align}
	%
	then we must have, 
	\begin{subequations}
		\begin{align}
			\dot{V}(t) &= -\tilde{q}(t)M(q)\tilde{q}(t) - q_r^T(t)PM(q) q_r(t), \\
			\dot{V}(t) & \triangleq -2 V(t).
			\label{eq:lyap_exp_proof}
		\end{align}
	\end{subequations}
	\textit{A fortiori}, exponential stability follows~\cite{Petros} as a result of \eqref{eq:lyap_exp_proof}.
\end{proof}

Through a careful choice of virtual input design, the convergence rate of system tracking errors can be significantly improved upon. In addition, the number of parameters introduced into the torque control equation is reduced from two to one. This is particularly of importance when designing fast real-time learning-based optimization for suitable controllers for nonlinear systems for which equation \eqref{eq:system} is applicable.

%\vskip 0pt plus -1fil
%\begin{IEEEbiography}[{\includegraphics[width=1in,height=1.25in,clip,keepaspectratio]{}}]{Olalekan Ogunmolu} is a Senior Researcher at Microsoft Research New York City. He got his Master of Science in Engineering in Control Systems from the University of Sheffield, South Yorkshire, United Kingdom in 2013. He was subsequently a PhD student at the University of Texas at Dallas from 2014 through 2019 where he was concurrently a visiting student researcher in the Medical Physics and Engineering Division of the University of Texas Southwestern Medical Center. He was recently a postdoctoral scholar at the University of Pennsylvania in Philadelphia, PA, USA. His research interests span machine learning, optimal, adaptive, and nonlinear control with emphasis on their applications to robotics and complex systems. He was  Associate Editor for IEEE ICRA in 2020, was a member of the New York Academy of Sciences, and American Association of Physicists in Medicine. %Currently, he is a member of IEEE Robotics and Automation, and Control Systems Societies.
%\end{IEEEbiography}

\EOD

\end{document}